\newcommand{\subparagraph}{}
\newcommand{\vast}{\bBigg@{2.6}}
\newcommand{\Vast}{\bBigg@{4.6}}
\newcommand{\ignore}[1]{}
\newtheorem{theorem}{Theorem}
\begin{document}
	\title{Jointly Optimizing Dataset Size and Local Updates in Heterogeneous Mobile Edge Learning}
	
	\author{
		\IEEEauthorblockN{Umair Mohammad\IEEEauthorrefmark{1}, Sameh Sorour\IEEEauthorrefmark{2}, Mohamed Hefeida\IEEEauthorrefmark{3}}
		\IEEEauthorblockA{\IEEEauthorrefmark{1}School of Computing and Information Sciences, Florida International University, Miami, FL, USA}
		\IEEEauthorblockA{\IEEEauthorrefmark{2}School of Computing, Queen’s University, Kingston, ON, Canada}\IEEEauthorblockA{\IEEEauthorrefmark{3}Department Computer Science and Electrical Engineering, West Virginia University, Morgantown, WV, USA}
		E-mails: umair.mohammad@fiu.edu, sameh.sorour@queensu.ca, mohamed.hefeida@mail.wvu.edu
	}
	
	% make the title area
	\maketitle
	
	\begin{abstract}
		This paper proposes to maximize the accuracy of a distributed machine learning (ML) model trained on learners connected via the resource-constrained wireless edge. We jointly optimize the number of local/global updates and the task size allocation to minimize the loss while taking into account heterogeneous communication and computation capabilities of each learner. By leveraging existing bounds on the difference between the training loss at any given iteration and the theoretically optimal loss, we derive an expression for the objective function in terms of the number of local updates. The resulting convex program is solved to obtain the optimal number of local updates which is used to obtain the total updates and batch sizes for each learner. The merits of the proposed solution, which is heterogeneity aware (HA), are exhibited by comparing its performance to the heterogeneity unaware (HU) approach.
	\end{abstract}
	
	\section{Introduction}
	\label{Section1_Introduction}
	Rapid migration towards smart infrastructure (cities, cars, grids, etc.) has caused an explosion of Internet-of-Things (IoT) devices on resource-constrained wireless edge networks. 
	A recent article reported reports that about every second another 127 devices connect to the internet with 41 billion devices expected by 2027 and Cisco expects that 800 zettabytes of data will be generated \cite{N202003_vXchnge_IoT_data} on wireless edge networks. The distributed nature of this data will place a heavy financial burden on backbone networks and raise security/privacy concerns \cite{Ref1_Original}. 
	Thus, it is anticipated that edge servers and end devices (e.g. smart phones, cameras, drones, connected vehicles, etc.) will perform 90\% of the data processing locally \cite{Ref2_Original}.  
	
	Machine Learning (ML) techniques have shown to perform better in many data analytics applications such as forecasting, image classification, clustering, etc. Many ML techniques, including regression, support vector machine (SVM) and neural networks (NN) are built on gradient-based learning. 
	This usually involves optimizing the model parameters by iteratively adding to them the gradient of the loss, itself a function of the model parameters.
	In the distributed learning model considered in this paper, a central server called an orchestrator initiates the learning process on multiple learners where each learner performs the ML iterations on the local dataset, collects the local ML models from each learner, does the global update/aggregation, and sends back the optimal ML model for the next cycle until a stopping criteria is reached.
	
	With the advent of Edge Artificial Intelligence (Edge AI), deploying ML models over end devices at edge networks will soon be the norm. Therefore, researchers have turned their focus to performing machine learning (ML) in a distributed manner (a.k.a. distributed learning (DL)) at the edge \cite{HarvardDistributed, Tuor_01_AdaptiveControl, Wang2019, Mohammad2019a, Tuor04, J1902_ResAllocMobDL_TPDS_WuZhang} in order to support edge analytics. 
	In general, DL at the edge can be characterized by mobile edge learning (MEL) though the most commonly studied setup is federated learning (FL).
	
	The works of \cite{Tuor_01_AdaptiveControl, Wang2019, Tuor04, J1902_ResAllocMobDL_TPDS_WuZhang} focus on jointly optimizing the number of local learning and global update cycles for FL. However, their approaches do not consider the inherent heterogeneity in the computing and communication capacities of different edge learners and links, respectively.
	Although the works of \cite{2019arXiv190907972C,2019arXiv191102417Y} have optimized resource allocation while maintaining accuracy, they do not investigate the impact of batch allocation. 
	The implications of wireless computation/communication heterogeneity on optimizing batch allocation to different learners for maximizing accuracy while satisfying a delay constraint were studied in \cite{Mohammad2019a}.
	
	To the best of the authors' knowledge, this work is the first attempt at jointly optimizing batch size allocated to learners and synchronizing the number of local and total iterations of the ML algorithm across all learners while satisfying time delay constraints for the MEL paradigm. Therefore, our proposed work differs from existing literature in the following ways: 1) considers the impact of batch size allocation 2) models the MEL system on actual channel parameters and device capabilities as opposed to generic resource consumption 3) optimizes both, the number of local and total updates as opposed to maximizing the local updates per global update.
	
	The superiority of our proposed heterogeneity aware (HA) approach is shown by comparing its performance to the heterogeneity-unaware (HU) approach of \cite{Tuor_01_AdaptiveControl,Wang2019}. 
	Tests on classifying the MNIST dataset \cite{MNIST_IEEE} using a DNN show that the HA approach is superior in terms of achieving a lower loss and providing higher validation accuracy. The rest of the paper is organized as follows: Section 2 introduces the global MEL model with time constraints. The problem of interest in this paper is formulated in Section 3 and our proposed solution is described in Section 4. Section 5 presents the results and Section 6 concludes the paper. 
	
	\section{MEL System Model}
	\label{Section02__SystemModelParameters}
	
	\subsection{Gradient-based Learning Preliminaries}
	
	Consider a dataset that consists of $d$ samples that can be trained using ML where each sample $n$ for $n=1,\ldots,d$ has a set of $\mathcal{F}$ features denoted by $\mathbf{x}_n$ and a target $y_n$. The objective is to find the relationship between $\mathbf{x}_n$ and $y_n$ using a set of parameters $\mathbf{w}$ such that a loss function, $F\left(\mathbf{x}_n, \mathbf{y}_n, \mathbf{w}\right)$ (or $F\left(\mathbf{w}\right)$ for short), is minimized. Because it is generally difficult to find an analytical solution, typically an iterative gradient descent approach is used to optimize the set of model parameters such that $\mathbf{w}[l+1] = \mathbf{w}[l] - \eta\nabla F\left(\mathbf{w}[l]\right)$ where $l$ represents the time step or iteration and $\eta$ is the learning rate typically set on the interval $(0,1)$. In deterministic gradient descent (DGD), the ML model goes over each sample one-by-one, or more commonly, batch-by-batch using a mini-batch approach, until it reaches sample \# $d$; completing one epoch. If data is re-shuffled randomly in between epochs, this method is know as stochastic GD (SGD). A total of $L$ epochs may be performed depending on the stopping criteria.

	\subsection{Transition to MEL}
	An MEL system consists of an orchestrator and $K$ learners where $d_k$ data samples are allocated to learner $k$, $k \in \mathcal{K} = \{ 1, 2, \dots, K\}$ so that it performs $\tau$ learning iterations. Each learner has a computational capacity of $f_k$ in Hz and an associated communication channel $h_{k0}$ to the orchestrator. An example of a DL system is illustrated in fig. \ref{figure0label}. We assume that $K$ perfectly orthogonal channels exist.
	\begin{figure}[t!]
		\centering
		\includegraphics[width=\linewidth,keepaspectratio]{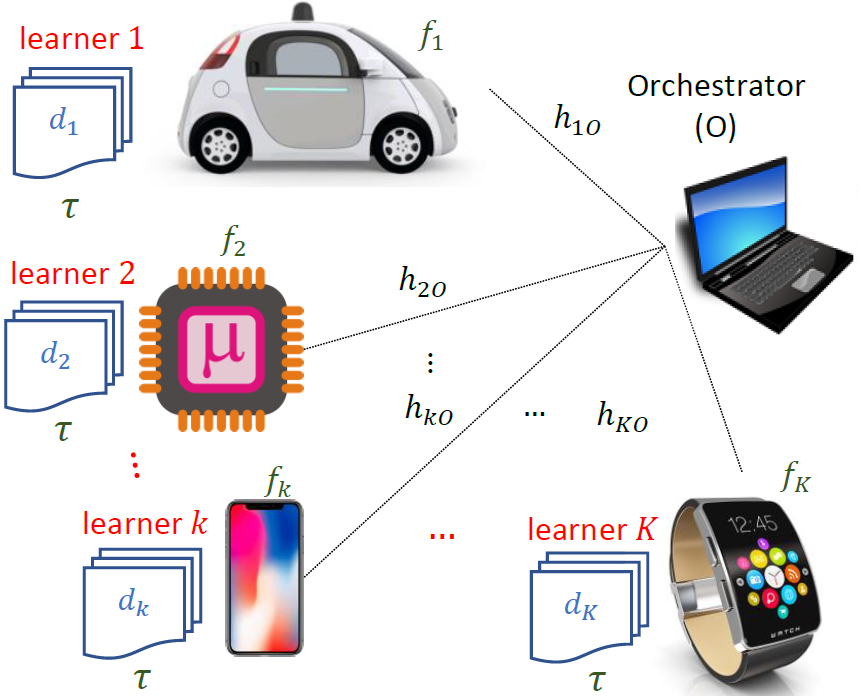}
		\caption{System model of a MEL setting}
		\label{figure0label}
	\end{figure}
	
	DL as described in section \ref{Section1_Introduction} in an MEL setting gives rise to two possibilities: offloaded learning (OL) and federated learning (FL). In the former, the orchestrator has the complete dataset and also re-transmits optimally allocated batches from the randomly shuffled dataset back to the learners. In the latter, the orchestrator only informs the learner of how many iterations to do and on what sample size of a locally stored dataset. This approach has been more commonly studied in literature as opposed to OL. FL is just a subset of the OL approach with the component of batch re-transmission from the orchestrator to each learner $k$ removed. Thus, the MEL model discussion will focus on the more general offloaded learning scenario but all variations for the federated learning scenario will be clarified whenever needed. 	
	
	We define $B_k^{data} = d_k\mathcal{F}\mathcal{P}_d$ as the size of the batch allocated to learner $k$ in bits and $B_k^{model} = \mathcal{P}_m\left(d_kS_d+S_m\right)$ as the size of the model in bits $~ \forall k \in \mathcal{K}$. The variables $\mathcal{P}_d$ and $\mathcal{P}_m$ represent precision with which the data and model are stored, respectively, $\mathcal{F}$ represents the feature vector size of $\mathbf{x}_n$ for $n=1,\dots, d$. $\mathcal{S}_m$ represents size of $\mathbf{w}$ as defined by the ML model whereas $S_d$ is the proportion of the model dependent on the dataset size.
	
	Between any two global update cycles, the orchestrator, which owns the global model, sends the data and model parameters $\mathbf{w}$ to each learner $k$ in parallel\footnote{In the FL scenario, the orchestrator only sends the global model $\mathbf{w}$ to each learner $k \in \mathcal{K}$. Learner $k$ selects $d_k$ samples from its private dataset.}, waits for all learners to complete $\tau$ local learning iterations, and then receives the locally updated model $\mathbf{w}_k \forall k \in \mathcal{K}$ followed by global aggregation. The communication of the models (and data) between the orchestrator and each learner $k$ occurs over a channel having a bandwidth $W$, a  channel power gain $h_{ko}$ and with a noise power spectral density of $N_0$. We assume that $P_{ko} = P_{ok}$ over one iteration of the global update and that channel parameters remain constant during global aggregation. 
	
	Furthermore, learner $k ~ \forall ~ k \in \mathcal{K}$ has a local processor resource $f_k$ dedicated to the DL task for an ML model of complexity $C_m$ that requires $X_k$ clock cycles to perform one local iteration. Given the above descriptions, the times of each learner $k$, $\forall~k$ comprise: orchestrator transmission time $t_k^S$ needed to send $\mathbf{w}$ and $d_k$ data samples\footnote{For the FL scenario, the only difference in the model is that the first term of the numerator will not exist.} to learner $k$, the duration $t_k^C$ needed by learner $k$ to perform one local update cycle, and time $t_k^R$ is the one needed for learner $k$ to send its updated local parameter matrix $\tilde{\mathbf{w}}_k$ to the orchestrator. The times $t_k^S$,$t_k^C$, and $t_k^R$, respectively, can be expressed as:
	\begin{equation}
	t_{k}^S = \dfrac{d_k\mathcal{F}\mathcal{P}_d + \mathcal{P}_m \left(d_k\mathcal{S}_d+\mathcal{S}_m\right)}{W\log_2\left(1+\frac{P_{ko} h_{ko}}{N_0}\right)}
	\label{eq:edge_time_sending}
	\end{equation}
	\begin{equation}
	t_k^C = \dfrac{X_k}{f_k} = \frac{d_k C_m}{f_k}
	\label{Eq_9_localTimeExecution}
	\end{equation} 
	\begin{equation}
	t_{k}^R = \dfrac{\mathcal{P}_m \left(d_k\mathcal{S}_d+\mathcal{S}_m\right)}{W\log_2\left(1+\frac{P_{ko} h_{ko}}{N_0}\right)}
	\label{eq:edge_time_receiving}
	\end{equation}
	
	\section{Problem Formulation}
	\label{Section03_ProblemFormulation}	
	
	As mentioned in Section I, the objective of this paper is to optimize the task allocation, i.e. the distributed batch sizes $d_k$ for each learner $k$ and the associated $\tau$ updates to be performed locally for a total of $L$ updates, such that the global DL loss is minimized and thus, the accuracy is maximized. To this end, the problem is formulated as a loss-function minimization problem over the optimization variables $L$, $\tau$ and $d_k$. 
	
	Consider that after every $\tau$ local iterations, a global aggregation will be performed and a total of $G$ global aggregations are performed. In any global update cycle, between any learner $k \in \mathcal{K}$ the orchestrator $O$, there will be one communication round each and $\tau$ local updates. For now, to facilitate the analysis, let us assume that $L$ is an integer multiple of $\tau$ such that $L=G\tau$ and that the communication and computation related parameters remain unchanged over the complete training process. In that case, each learner needs time $t_k^C ~ \forall ~ k \in \mathcal{K}$ for one local update and $t_k^S+t_k^R$ for the $g^{th}$ global aggregation for $g=1,\dots,G$. Overall, $L$ local updates and $G=L/\tau$ global updates will be performed. Then,the total time consumed by learner $k$ denoted by $t_k ~ \forall k \in \mathcal{K}$ can be expressed as:  
	\begin{equation}
	t_k = L\left(t_k^C + \dfrac{t_k^S + t_k^R}{\tau}\right)
	\end{equation}
	
	Later on, we will shoe how the values of $\tau$ and $d_k$ for each set of $\tau$ local ML iterations in one global cycle $g$ will be re-calculated according to the latest channel parameters and computational capabilities. The total training time within which the process should be completed is given bounded by $T$. Because the $\tau$ iterations occur in parallel over the $K$ learners, we need the time for the most time-consuming learner to be less than $T$ such that $\max(t_k) \leq T$. Alternatively, it is sufficient for this condition to hold that $t_k \leq T  ~ \forall k \in \mathcal{K}$. 
	
	This point differentiates our work from that of \cite{Tuor_01_AdaptiveControl} where we actually capture the time consumed by parallel local update processes rather than a generic resource consumption model. 
	Therefore, the optimization problem can be written as:
	\begin{subequations}
		\begin{align}
		&\qquad\operatornamewithlimits{min}_{L,\tau,{d}_k\forall~k}  \quad F(\mathbf{w}[L]) \tag{\ref{eq_Original_Prob}}\\
		& \quad \nonumber\\
		\text{s.t. }\qquad & L\left(C_k^2  d_k + \dfrac{C_k^1 d_k + C_k^0}{\tau}\right) \leq T, \quad k \in \mathcal{K} \nonumber
		\end{align}
		\label{eq_Original_Prob}
	\end{subequations}
	
	The constants $C_k^2$,$C_k^1$,and $C_k^0$ can be defined as:
	\begin{subequations}
	\begin{align}
	C_k^2 = \frac{\mathcal{C}_m}{f_k} 
	\end{align}
	\begin{align}
	C_k^1 = \frac{\mathcal{F}\mathcal{P}_d+2\mathcal{P}_m\mathcal{S}_d}{W\log_2\left(1+\frac{P_{ko} h_{ko}}{N_0}\right)} 
	\end{align}
	\begin{align}
	C_k^0 = \frac{2\mathcal{P}_m\mathcal{S}_m}{W\log_2\left(1+\frac{P_{ko} h_{ko}}{N_0}\right)} 
	\end{align}
	\end{subequations}
	It is generally impossible to find an exact expression relating the optimization variables to the objective for most ML models. Therefore, the objective will be re-formulated as a function of the convergence bounds on the DL process over the edge. For more details on these bounds, the readers are referred to \cite{Tuor_01_AdaptiveControl}. We will use the results and extend the discussion to our formulation, and then propose a strategy to jointly find the optimal $\tau$, $d_k$, and $L$.   
	
	\subsection{Convergence Bounds} 
	The convergence bounds have been derived and well-discussed in \cite{Tuor_01_AdaptiveControl}. For completeness, we will present some of the important results here in order to support our analysis. Let us continue with the assumption that $L$ is an integer multiple of $\tau$. Then, the global aggregation will only occur at every $\tau$ updates. I.e. the local updates occur at every iteration $l=1,\ldots,L$ and a global update will occur whenever $ l = g\tau$ for $g=1,\ldots,G$. For any interval $[g]$ defined over $[(g-1)\tau,g\tau]$, define an auxiliary global model denoted by $\mathbf{v}$ which would have been calculated if a global update occurred as follows:
	\begin{equation}
	\mathbf{v}_{[g]}[l] = \mathbf{v}_{[g]}[l-1]-\eta \nabla F(\mathbf{v}_{[g]}[l-1]) 
	\end{equation} 
	
	Let the local model parameter set of learner $k$ be denoted by $\mathbf{w}_k$ and the local loss by $F_k(\mathbf{w}_k)$. Then, the optimal model at iteration $l$ can be obtained by:
	\begin{equation}
	\mathbf{w}[l] = \dfrac{1}{d}\sum_{k = 1}^K d_k \mathbf{w}_k[l]
	\end{equation} 
	The optimal $\mathbf{w[l]}$ will only be visible when $l = g\tau$ and for that iteration, the global loss can be defined by:
	\begin{equation}
	F(\mathbf{w}) = \dfrac{1}{d}\sum_{k = 1}^K d_k F_k(\mathbf{w})
	\end{equation} 
	
	The following assumptions are made about the loss function $F_k(\mathbf{w})$ at learner $k$: $F_k(\mathbf{w})$ is convex, $\lVert F_k(\mathbf{w})-F_k(\bar{\mathbf{w}}) \rvert \leq \rho \lvert \mathbf{w}-\bar{\mathbf{w}} \rvert$, and $ \lVert \nabla F_k(\mathbf{w})-\nabla F_k(\bar{\mathbf{w}}) \rvert \leq \beta \lvert \mathbf{w}-\bar{\mathbf{w}} \rvert$ for any $\mathbf{w}$, $\bar{\mathbf{w}}$. These assumptions will hold for ML models with convex loss function such as linear regression and SVM. By simulations, we will show that the proposed solutions work for non-convex models such as the neural networks with ReLU activation. 
	
	Let us also assume that the local loss function at $F_k(\mathbf{w})$ does not diverge by more than $\delta_k$ such that $\lvert F_k(\mathbf{w})- F(\mathbf{w}) \rvert \leq \delta_k$ and $\delta = \frac{\sum_k d_k \delta_k}{d}$. Furthermore, $\lvert \mathbf{w} - \mathbf{v}_{[g]}[l-1] \rvert \leq h(l-(g-1)\tau)$. 
	For any $\tau$, $h(\tau) = \frac{\delta}{\beta}\left[\left(\eta\beta+1\right)^\tau-1\right]-\eta\delta\tau$. Recall that $\eta$ is the learning rate and $\beta$ can be estimated by $\beta = \dfrac{\sum_k d_k \beta_k}{d}$ where:
	\begin{equation} 
	\beta_k = \dfrac{\lVert \nabla F_k(\mathbf{w_k[l]})-\nabla F_k(\mathbf{w[l]}) \rvert}{\lvert \mathbf{w_k[l]}-\mathbf{w[l]}\rvert}
	\end{equation}
	
	Based on this, the objective can be written as a function of the difference of the global loss after iteration $L$ and the optimal global loss. Given the above assumptions about the loss function, and the constraints on the optimization variables and the time taken by learner $k\in\mathcal{K}$, the optimization problem can be written as:		
	\begin{subequations}
		\begin{align}
		&\qquad\operatornamewithlimits{min}_{L,\tau,{d}_k\forall~k}  \quad F(\mathbf{w}[L])-F(\mathbf{w}^*)\\
		& \quad \nonumber\\
		\text{s.t. }\qquad & L\left(C_k^2  d_k + \dfrac{C_k^1 d_k + C_k^0}{\tau}\right) \leq T, \quad k = 1,\ldots,K \label{bounded-time-const}\\
		& \sum_{k = 1}^{K}d_k = d \label{orignial-batch-const}\\ 
		& \tau \in \mathcal{Z}_+ \label{orignial-tau-const}\\
		& L \in \mathcal{Z}_+ \label{orignial-L-const}\\
		& d_k \in \mathcal{Z}_+, \quad k = 1,\ldots,K \label{orignial-d-const}\\
		& \eta(1-\frac{\beta\eta}{2})-\frac{\rho}{\omega\epsilon^2}\frac{h(\tau)}{\tau} \geq 0 \label{relaxed-minloss-const} \\
		& \eta \beta \leq 1 \label{relaxed-etabeta-const} \\
		& F\left(\mathbf{v}_[g][l]\right)-F(\mathbf{w}^*) \geq \epsilon \label{relaxed-localloss-const} \\
		& F(\mathbf{w}(L)-F(\mathbf{w}^*)) \geq \epsilon \label{relaxed-globalloss-const} 
		\end{align}
		\label{Eq_13_OurProb}
	\end{subequations}
	Constraint (\ref{bounded-time-const}) guarantees that the time consumed by a total of $L$ updates does not exceed the total training time available given by $T$ seconds. 
	Constraint (\ref{orignial-batch-const}) ensures that the total dataset comprising $d$ samples is utilized. Constraints (\ref{orignial-tau-const}) - (\ref{orignial-d-const}) are simply non-negativity and integer constraints for the optimization variables where $L$, $\tau$ and/or all $d_k$'s being zero represent cases where DL is not possible in the MEL environment. Constraints (\ref{relaxed-etabeta-const}) and (\ref{relaxed-minloss-const}) represent a bound on the learning rate, meaning it should be small enough such that it guarantees convergence. When (\ref{relaxed-etabeta-const}) holds, (\ref{relaxed-minloss-const}) will always hold. Constraints (\ref{relaxed-localloss-const}) and (\ref{relaxed-globalloss-const}) define a lower bound on the gap between the optimal loss and the auxiliary loss at interval $[g]$ and the global loss, respectively, where $\epsilon > 0$. The parameter $\omega \triangleq \min_g \lVert \mathbf{v}_{[g]}[g-1]\tau-\mathbf{w}^*\rVert^{-2}$ represents the interval that minimizes the difference between the auxiliary loss and the global loss. The variables $\rho$, $\omega$, and $\epsilon$ appear in a single term $\frac{\rho}{\omega\epsilon^2}$ in (\ref{relaxed-minloss-const}) which represents a control parameter. Later on, this term will be represented by $B_0$ but for now, we will continue with the original terms make the analysis relatable to the original variables.
	
	It is assumed that $\eta > 0$ (typically $0<\eta<1$) and $\beta > 0$. Furthermore $\eta$ and $\epsilon$ can be set to small enough values such that $\eta\beta \leq 1$, and the constraints in (\ref{relaxed-minloss-const})-(\ref{relaxed-globalloss-const}) are satisfied. For a $\beta$-smooth function, Bernoulli's inequality will hold implying that $(\eta\beta+1)^\tau \geq \eta\beta\tau+1$. Furthermore, once all the assumptions about the loss function constraints are satisfied, it can be shown that $F(\mathbf{w}[L])-F(\mathbf{w}^*) \leq \frac{1}{\eta(1-\frac{\beta\eta}{2})-\frac{\rho}{\omega\epsilon^2}\frac{h(\tau)}{\tau}}$. Thus, the problem in (\ref{Eq_13_OurProb}) can be re-formulated as:
	\begin{subequations}
		\begin{align}
		&\qquad\operatornamewithlimits{min}_{L,\tau,{d}_k~\forall~k}  \quad \dfrac{1}{L\left[\eta(1-\frac{\beta\eta}{2})-\frac{\rho}{\omega\epsilon^2}\frac{h(\tau)}{\tau}\right]}\\
		& \quad \nonumber \\
		\text{s.t. }\qquad & L\leq \dfrac{T\tau}{C_k^2\tau d_k + C_k^1 d_k + C_k^0}, \quad k = 1,\ldots,K \label{relaxed-time-const}\\
		& \sum_{k = 1}^{K}d_k = d \label{relaxed-batch-const}\\ 
		& \tau \in \mathcal{Z}_+ \label{relaxed-tau-const}\\
		& L \geq 0 \label{relaxed-L-const}  \\
		& d_k \geq 0, \quad k = 1,\ldots,K \label{relaxed-d-const} 	
		\end{align}
		\label{Eq_13_RelaxedProblem}
	\end{subequations}
	 Note that the integer constraints on $d_k$ and $L$ have been relaxed in (\ref{relaxed-d-const}) and (\ref{relaxed-L-const}) which will help in proposing a solution. 
	
	\section{Proposed Solution}
	\label{Section3_Solution}
	The idea of the proposed solution is to re-write the objective as a function of $\tau$ by using the constraints on the total time consumption and the fact that the system must train the model on at least $d$ training samples.    
	
	\subsection{Relating Bounds to $\tau$}
	%\IEEEPARstart{T}{he} 
	The orchestrator can ensure that the bounds in constraints (\ref{relaxed-etabeta-const})-(\ref{relaxed-globalloss-const}) are satisfied by choosing small enough values for $\eta$ and $\epsilon$. 
	In that case, if constraint (\ref{relaxed-minloss-const}) holds, the denominator of the objective function will be positive. Furthermore, if we relax the integer constraint on $L$, the optimal value for the total learning iterations can be given by:
	\begin{equation}
	L = \dfrac{T\tau}{C_k^2\tau d_k + C_k^1 d_k + C_k^0}, \quad k = 1,\ldots,K \label{eq_LOptimal_dkAndTau}
	\end{equation}
	By using the equality constraint in (\ref{relaxed-batch-const}), and re-arranging (\ref{eq_LOptimal_dkAndTau}) to make $d_k$ the subject, and defining two new variables $a_k = \frac{C_k^1}{C_k^2}$ and $b_k = \frac{C_k^0}{C_k^2}$, we can write $L$ as a function of $\tau$.
	\begin{equation}
	L(\tau) = \dfrac{KT\sum_{k=1}^{K} \tau \prod_{\substack{l=1 \\ l\neq k}}^{K} \left(\tau^+b_l\right)}{d\prod_{k=1}^{K}(\tau + b_k)+\sum_{k=1}^{K} a_k \prod_{\substack{l=1 \\ l\neq k}}^{K} \left(\tau^+b_l\right)} \label{eq_LOptimal_Tau}
	\end{equation} 
	The objective function denoted by $O$ can be re-written as a function of $\tau$ in the following manner:
	\begin{equation}
	O(\tau) = \dfrac{1}{L(\tau)}\dfrac{1}{\left[\eta(1-\frac{\beta\eta}{2})-\frac{\rho}{\omega\epsilon^2}\frac{h(\tau)}{\tau}\right]}=\dfrac{P(\tau)}{L(\tau)}
	\label{eq_O_tau}
	\end{equation}
	\begin{theorem}
		$O(\tau)$ is strictly convex on the domain $\tau \geq 0$.
		\label{theorem1}
	\end{theorem}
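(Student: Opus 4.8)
The plan is to exploit the product form $O(\tau)=P(\tau)\cdot\frac{1}{L(\tau)}$ from (\ref{eq_O_tau}) and to prove that \emph{each} factor is log-convex on $\tau>0$. Since a product of log-convex functions is log-convex and every log-convex function is convex, this establishes convexity of $O$, with strictness inherited from the first factor. Throughout I write $A\defeq\eta(1-\frac{\beta\eta}{2})>0$, $B_0\defeq\frac{\rho}{\omega\epsilon^2}>0$, and $Q(\tau)\defeq A-B_0\,\frac{h(\tau)}{\tau}$, so that $P(\tau)=1/Q(\tau)$; the closed domain $\tau\ge0$ is then handled by continuity, noting that $O(\tau)\to\infty$ as $\tau\to0^+$ because $L(\tau)\to0$.

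For the first factor I would substitute $h(\tau)=\frac{\delta}{\beta}[(1+\eta\beta)^{\tau}-1]-\eta\delta\tau$ and set $\gamma\defeq\ln(1+\eta\beta)>0$, giving $\frac{h(\tau)}{\tau}=\frac{\delta}{\beta}\cdot\frac{e^{\gamma\tau}-1}{\tau}-\eta\delta$. Expanding $\frac{e^{\gamma\tau}-1}{\tau}=\sum_{m\ge0}\frac{\gamma^{m+1}}{(m+1)!}\,\tau^{m}$ exhibits a power series with strictly positive coefficients, hence a strictly convex function of $\tau$. Consequently $Q(\tau)=\big(A+B_0\eta\delta\big)-B_0\frac{\delta}{\beta}\cdot\frac{e^{\gamma\tau}-1}{\tau}$ is strictly concave, and it is positive on the feasible region exactly by constraint (\ref{relaxed-minloss-const}). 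A positive strictly concave function is strictly log-concave, so its reciprocal $P=1/Q$ is strictly log-convex, and in particular strictly convex, since $(\log P)''=-(\log Q)''>0$.

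For the second factor it suffices to prove that $L(\tau)$ is log-concave. Writing $\Pi(\tau)\defeq\prod_{k=1}^{K}(\tau+b_k)$ and using $\Pi'(\tau)=\sum_{k}\prod_{l\neq k}(\tau+b_l)$, expression (\ref{eq_LOptimal_Tau}) becomes $L(\tau)=\mathcal{N}(\tau)/\mathcal{D}(\tau)$ with $\mathcal{N}(\tau)=KT\,\tau\,\Pi'(\tau)$ and $\mathcal{D}(\tau)=d\,\Pi(\tau)+\sum_{k}a_k\prod_{l\neq k}(\tau+b_l)$, a ratio of two polynomials. The numerator has only real, nonpositive roots: $\Pi$ is real-rooted with roots $-b_k\le0$, so by Rolle's theorem $\Pi'$ is real-rooted with roots in $[-\max_k b_k,-\min_k b_k]$, and the factor $\tau$ adds the root $0$. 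For the denominator, since every $a_k>0$ the sum $\sum_k\frac{a_k}{\tau+b_k}$ strictly decreases from $+\infty$ to $-\infty$ across each gap between consecutive poles $-b_k$ and tends to $0$ as $\tau\to\pm\infty$; matching it to the level $-d<0$ yields exactly $K$ real roots, all strictly negative. Thus both $\mathcal{N}$ and $\mathcal{D}$ are real-rooted with nonpositive roots, and
\[
(\log L)''(\tau)=\sum_{q}\frac{1}{(\tau-q)^{2}}-\sum_{p}\frac{1}{(\tau-p)^{2}},
\]
the first sum over the $K$ denominator roots $q<0$ and the second over the $K$ numerator roots $p\le0$. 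Log-concavity of $L$ is therefore equivalent to $\sum_{q}\frac{1}{(\tau-q)^2}\le\sum_{p}\frac{1}{(\tau-p)^2}$ for every $\tau>0$.

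The main obstacle is precisely this inequality: it asserts that the denominator roots sit farther to the left than the numerator roots, in the interlacing sense that makes every inverse-square distance from a point $\tau>0$ smaller. I would establish it by proving that the two root sets interlace and exhibiting a pairing in which each denominator root lies to the left of its partnered numerator root, so the comparison holds summand by summand; the numerator root at $0$ supplies the dominant $1/\tau^2$ that secures the inequality near the origin, while Vieta's formulas give $\sum_q q=-\big(\sum_k b_k+\tfrac{1}{d}\sum_k a_k\big)<-\tfrac{K-1}{K}\sum_k b_k=\sum_p p$, which controls the tail as $\tau\to\infty$. Once log-concavity of $L$ is in hand, $1/L$ is log-convex; multiplying by the strictly log-convex $P$ shows $O=P/L$ is strictly log-convex, hence strictly convex, completing the proof. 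A direct computation of $O''=(P/L)''$ is available as a fallback, but it is less transparent: because $P$ is increasing while $1/L$ is decreasing, the cross term $2P'\,(1/L)'$ is negative and must be dominated by the curvature terms, an awkwardness the log-domain argument avoids.
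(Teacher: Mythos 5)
Your proposal is correct in outline and takes a genuinely different route from the paper's. The paper splits $1/L$ into two rational pieces $M$ and $N$, expands $O''=P(M''+N'')+2P'(M'+N')+(M+N)P''$, and tries to show all three terms are positive by a sign analysis of each derivative; you instead show $O=P\cdot(1/L)$ is strictly log-convex, by proving $P=1/Q$ is strictly log-convex ($Q$ positive and strictly concave, via the positive-coefficient power series of $(e^{\gamma\tau}-1)/\tau$) and $L$ is log-concave (real-rootedness of its numerator and denominator polynomials plus a root-pairing inequality). The one step you leave as a plan does go through: sorting both root sets in decreasing order, the numerator roots are $p_1=0$ and $p_{j+1}=r_j\in[-b_{(j+1)},-b_{(j)}]$ (the critical points of $\Pi$ interlacing the $-b_{(j)}$), while the $j$th denominator root satisfies $q_j<-b_{(j)}\le p_j$ for $j\ge 2$ and $q_1<-b_{(1)}<0=p_1$, so $q_j\le p_j$ termwise and $(\log L)''\le 0$ for $\tau>0$. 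More importantly, your closing remark about the cross term identifies an actual flaw in the paper's argument: since $h(\tau)/\tau$ is increasing, the denominator of $P$ decreases and $P'>0$ — indeed $C^\tau[1-(\ln C)\tau]\le 1$ for all $\tau\ge 0$ (set $u=\tau\ln C$ and note $e^u(1-u)\le 1$), which is the \emph{opposite} of the inequality the paper requires for $P'<0$. Hence the paper's middle term $2P'(M'+N')$ is negative and its "three positive terms" argument does not close, whereas your log-domain argument never needs the sign of that cross term. Two points to tighten: constraint (\ref{relaxed-minloss-const}) only guarantees $Q\ge 0$, so you must restrict to the region where it holds strictly (as the paper also implicitly does, and noting that $Q>0$ fails for sufficiently large $\tau$); and the claim on the closed domain $\tau\ge 0$ is vacuous at $\tau=0$ since $L(0)=0$, so the meaningful statement is strict convexity on $\tau>0$.
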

	\begin{proof}
		Please refer to Appendix A for the proof.		
	\end{proof} 
	
		\begin{algorithm}[t]
		\caption{MEL Process at the Orchestrator}
		\label{alg1}
		\begin{algorithmic}[1]
			\renewcommand{\algorithmicrequire}{\textbf{Input:}}
			\renewcommand{\algorithmicensure}{\textbf{Output:}}
			\REQUIRE $T$, $B_0$, $\tau_{max}$, $d$, $K$
			\ENSURE  $\mathbf{w}[L]$
			\\Initialize $\tau \leftarrow 1$, $d_k \leftarrow \frac{d}{K}$, $L=1$, and $\hat{T} \leftarrow \max(t_k)$
			\\Set $\mathbf{w} \leftarrow \mathbf{w}[0]$ as a random vector
			\\ \textit{Parallel Process}
			\WHILE {$\tau > 0$}
			\STATE 
			Send $\mathbf{w}$ and $d_k$ samples to each learner $k$
			\STATE After $\tau$ local iterations, collect $\mathbf{w}_k$, $\beta_k$, and $\nabla F_k(\mathbf{w_k})$
			\STATE Estimate $\mathbf{w}$,$\nabla F(\mathbf{w})$, $\beta$, and $\delta$
			\STATE Receive $P_{kO}$, $h_{kO}$ and $f_k$ from each learner $k ~ \forall ~ k ~ \in ~ \mathcal{K}$
			\ignore{\STATE Determine $a_k$ and $b_k$ $~\forall ~ k ~ \in ~ \mathcal{K}$}
			\STATE Set $T \leftarrow T-\hat{T}$
			\STATE Find the optimal $\tau$ by using theorem 1
			\STATE Calculate $L$ using (\ref{eq_LOptimal_Tau}) 
			\STATE Use new value of $L$ to obtain $d_k ~ \forall ~ k$ using (\ref{eq_LOptimal_dkAndTau})
			\STATE Let $\hat{T} \leftarrow \hat{T} + \max(t_k)$
			\IF {$\hat{T} > T$}
			\STATE Reduce $\tau$ to maximum value $\geq 0$  such that $\hat{T} \leq T$
			\ENDIF
			\ENDWHILE
			\RETURN $\mathbf{w}$ 
		\end{algorithmic} 
	\end{algorithm} 
	
	Because $\frac{\partial O}{\partial \tau} = 0$ does not have a closed form solution, the optimal $\tau^*$ can be obtained by solving the following problem:
	\begin{equation}
	\tau* = \arg \min_{\tau} O(\tau)
	\end{equation}  
	The value of $\tau^*$ can be difficult to obtain because $\tau$ is unbounded. However, we can limit the search space by $\tau_{max}$ and then use a brute force approach to find the optimal $\tau^*$. In fact, a binary search procedure has been proposed in \cite{Tuor_01_AdaptiveControl} which has a complexity of $\mathcal{O}(\log \tau_{max})$. Once $\tau*$ has been determined, $L^*$ can be obtained using (\ref{eq_LOptimal_Tau}) and re-set to this new value. The values of $d_k^* ~ \forall ~ k \in \mathcal{K}$ for the next $\tau$ updates can be obtained using (\ref{eq_LOptimal_dkAndTau}). Because the integer constraint on $d_k ~ \forall ~ k \in \mathcal{K}$ was relaxed in (\ref{Eq_13_RelaxedProblem}), they can be set by flooring the actual value. This process is repeated for each global cycle until the total training time is consumed. This process is summarized in algorithm \ref{alg1}.	
	
	\section{Simulation Results}
	\subsection{Simulation Environment, Dataset, and Learning Model} 
	The learners are assumed to be located in a cellular type environment and are assumed to be a combination of smart phone and Raspberry PI type micro-controllers.
	The channel parameters and device capabilities are listed in table 1. To test our proposed MEL paradigm, the commonly used MNIST \cite{MNIST_IEEE} dataset is trained using a DNN with 3 hidden layers consisting of 300, 124 and 60 neurons, respectively. The details of the resulting model sizes and complexities are discussed in \cite{Mohammad2019a}.
	\begin{table}[]
		\centering
		\small
		\label{Table_1_OfParameters}
		\caption{List of simulation parameters}
		\begin{tabular}{|l|l|}
			\hline
			\multicolumn{1}{|c|}{\textbf{Parameter}} & \multicolumn{1}{c|}{\textbf{Value}}                         \\ \hline
			Cell Attenuation Model                   & $128+37.1\log(R)$ dB \cite{Moha1812:Multi}  \\ \hline
			Node Bandwidth $(W)$                     & 5 MHz                                                       \\ \hline
			Device proximity $(R)$                   & 500m                                                        \\ \hline
			Transmission Power $(P_k)$               & 23 dBm                                                      \\ \hline
			Noise Power Density $(N_0)$              & -174 dBm/Hz                                                 \\ \hline
			Computation Capabilities $(f_k)$         & $\sim \{~2.4,~1.2\}$ GHz                             \\ \hline
			MNIST Dataset size $(d)$                 & 54,000 images                                               \\ \hline
			MNIST Dataset Features $(\mathcal{F})$   & 784 ($~28 \times 28~$) pixels                               \\ \hline
		\end{tabular}
	\end{table}
	
	For the simulation, we consider a set of $K=20$ learners and test for total training times of $T=\{300,~400,~500,~600\}$s. It was found that a value of $\eta=0.01$ for the learning rate works very well and setting $B_0$ in the range $0.005-0.01$ provided solutions that converge. $\tau_{max}$ is set to the case where only 3 global aggregations would be done on $d_k = d/K ~ \forall k \in \mathcal{K}$.
	\begin{figure}[t!]
		\centering
		\includegraphics[scale=0.5]{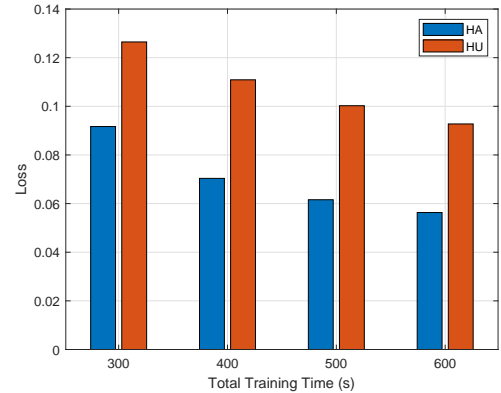}
		\caption{Training loss versus total training time for $K=20$ learners}
		\label{figure3label}
	\end{figure}
	
	\subsubsection{Loss and Validation Accuracy}
	We plot the final loss value after training for time $T$ in figure \ref{figure3label} and the final accuracy in figure \ref{figure4label} for both approaches, the proposed HA approach and the HU approach in \cite{Tuor_01_AdaptiveControl}. As expected, as the training time is increased, the loss value decreases for all approaches. However, for the HA approach, there is just a slight increase in validation accuracy because it is able to achieve that in minimum time. The main conclusion is that optimizing $\tau$ and $d_k$ jointly influences the possible number of global aggregations $G$ and the total iterations $L$ which helps in converging to a lower loss and a higher final accuracy. For example, the loss of the HA approach is lower by 0.03-0.05 which represents gains in the range of 27\% - 40\%. Furthermore, training for 300s using the HA approach achieves a 97\% accuracy within 300s of training, a value not achieved by the HU approach even in 600s.  	
	\begin{figure}[t!]
		\centering
		\includegraphics[scale=0.5]{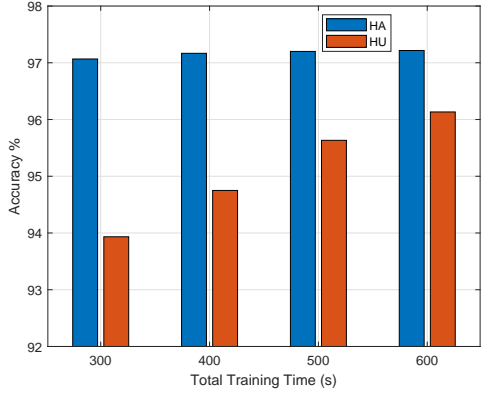}
		\caption{Validation accuracy versus total training time for $K=20$ learner }
		\label{figure4label}
	\end{figure}

	\section{Conclusion}
	This paper extends the efforts towards the MEL paradigm by jointly optimizing the task size allocation for each learner and the number of local ML iterations in a global cycle for distributed ML over the wireless edge. The problem uses existing bounds on the DL paradigm to relate the optimization variables to the loss function which is shown to be convex. It is shown that optimal value of the local updates minimizes the upper bound on the loss difference and the total iterations and batch sizes after every global step. A heuristic approach is proposed to carry out the global learning process. Through simulations, it shown that our HA scheme performed much better in terms of the possible number of updates and learning accuracy compared to the HU scheme. 
	
	\bibliographystyle{IEEEtran}
	\bibliography{Dissertation}

% Generated by IEEEtran.bst, version: 1.14 (2015/08/26)
\begin{thebibliography}{10}
\providecommand{\url}[1]{#1}
\csname url@samestyle\endcsname
\providecommand{\newblock}{\relax}
\providecommand{\bibinfo}[2]{#2}
\providecommand{\BIBentrySTDinterwordspacing}{\spaceskip=0pt\relax}
\providecommand{\BIBentryALTinterwordstretchfactor}{4}
\providecommand{\BIBentryALTinterwordspacing}{\spaceskip=\fontdimen2\font plus
\BIBentryALTinterwordstretchfactor\fontdimen3\font minus
  \fontdimen4\font\relax}
\providecommand{\BIBforeignlanguage}[2]{{%
\expandafter\ifx\csname l@#1\endcsname\relax
\typeout{** WARNING: IEEEtran.bst: No hyphenation pattern has been}%
\typeout{** loaded for the language `#1'. Using the pattern for}%
\typeout{** the default language instead.}%
\else
\language=\csname l@#1\endcsname
\fi
#2}}
\providecommand{\BIBdecl}{\relax}
\BIBdecl

\bibitem{N202003_vXchnge_IoT_data}
\BIBentryALTinterwordspacing
K.~Gyarmathy, ``{Comprehensive Guide to IoT Statistics You Need to Know in
  2020},'' 2020. [Online]. Available:
  \url{https://www.vxchnge.com/blog/iot-statistics}
\BIBentrySTDinterwordspacing

\bibitem{Ref1_Original}
\BIBentryALTinterwordspacing
M.~Chiang and T.~Zhang, ``{Fog and IoT: An Overview of Research
  Opportunities},'' \emph{IEEE Internet of Things Journal}, vol.~3, no.~6, pp.
  854--864, dec 2016. [Online]. Available:
  \url{http://ieeexplore.ieee.org/document/7498684/}
\BIBentrySTDinterwordspacing

\bibitem{Ref2_Original}
\BIBentryALTinterwordspacing
{Rhea Kelly}, ``{Internet of Things Data To Top 1.6 Zettabytes by 2020},''
  2015. [Online]. Available:
  \url{https://campustechnology.com/articles/2015/04/15/internet-of-things-data-to-top-1-6-zettabytes-by-2020.aspx}
\BIBentrySTDinterwordspacing

\bibitem{HarvardDistributed}
S.~Teerapittayanon, B.~McDanel, and H.~T. Kung, ``{Distributed Deep Neural
  Networks over the Cloud, the Edge and End Devices},'' \emph{Proceedings -
  International Conference on Distributed Computing Systems}, pp. 328--339,
  2017.

\bibitem{Tuor_01_AdaptiveControl}
\BIBentryALTinterwordspacing
S.~Wang, T.~Tuor, T.~Salonidis, K.~K. Leung, C.~Makaya, T.~He, and K.~Chan,
  ``{When Edge Meets Learning : Adaptive Control for Resource-Constrained
  Distributed Machine Learning},'' in \emph{INFOCOM}, 2018. [Online].
  Available: \url{https://ieeexplore.ieee.org/document/8486403}
\BIBentrySTDinterwordspacing

\bibitem{Wang2019}
\BIBentryALTinterwordspacing
------, ``{Adaptive Federated Learning in Resource Constrained Edge Computing
  Systems},'' \emph{IEEE Journal on Selected Areas in Communications}, no.
  Early Access, pp. 1--1, 2019. [Online]. Available:
  \url{https://ieeexplore.ieee.org/document/8664630/}
\BIBentrySTDinterwordspacing

\bibitem{Mohammad2019a}
\BIBentryALTinterwordspacing
U.~Mohammad and S.~Sorour, ``{Adaptive Task Allocation for Mobile Edge
  Learning},'' in \emph{2019 IEEE Wireless Communications and Networking
  Conference Workshop (WCNCW)}.\hskip 1em plus 0.5em minus 0.4em\relax IEEE,
  apr 2019, pp. 1--6. [Online]. Available:
  \url{https://ieeexplore.ieee.org/document/8902527/}
\BIBentrySTDinterwordspacing

\bibitem{Tuor04}
\BIBentryALTinterwordspacing
D.~Conway-Jones, T.~Tuor, S.~Wang, and K.~K. Leung, ``{Demonstration of
  Federated Learning in a Resource-Constrained Networked Environment},'' in
  \emph{2019 IEEE International Conference on Smart Computing (SMARTCOMP)},
  2019. [Online]. Available:
  \url{https://ieeexplore.ieee.org/abstract/document/8784064}
\BIBentrySTDinterwordspacing

\bibitem{J1902_ResAllocMobDL_TPDS_WuZhang}
\BIBentryALTinterwordspacing
C.~Wu, L.~Zhang, Q.~Li, Z.~Fu, W.~Zhu, and Y.~Zhang, ``{Enabling Flexible
  Resource Allocation in Mobile Deep Learning Systems},'' \emph{IEEE
  Transactions on Parallel and Distributed Systems}, vol.~30, no.~2, pp.
  346--360, feb 2019. [Online]. Available:
  \url{https://ieeexplore.ieee.org/document/8434315/}
\BIBentrySTDinterwordspacing

\bibitem{2019arXiv190907972C}
\BIBentryALTinterwordspacing
M.~Chen, Z.~Yang, W.~Saad, C.~Yin, H.~V. Poor, and S.~Cui, ``{A Joint Learning
  and Communications Framework for Federated Learning over Wireless
  Networks},'' \emph{arXiv e-prints}, p. arXiv:1909.07972, sep 2019. [Online].
  Available: \url{https://arxiv.org/abs/1909.07972}
\BIBentrySTDinterwordspacing

\bibitem{2019arXiv191102417Y}
\BIBentryALTinterwordspacing
Z.~Yang, M.~Chen, W.~Saad, C.~S. Hong, and M.~Shikh-Bahaei, ``{Energy Efficient
  Federated Learning Over Wireless Communication Networks},'' \emph{arXiv
  e-prints}, p. arXiv:1911.02417, nov 2019. [Online]. Available:
  \url{https://arxiv.org/abs/1911.02417v1}
\BIBentrySTDinterwordspacing

\bibitem{MNIST_IEEE}
\BIBentryALTinterwordspacing
Y.~LeCun, L.~Bottou, Y.~Bengio, and P.~Haffner, ``{Gradient-based Learning
  Applied to Document Recognition},'' \emph{Proceedings of IEEE}, vol.~86,
  no.~11, pp. 2278 -- 2324, 1998. [Online]. Available:
  \url{https://ieeexplore.ieee.org/document/726791}
\BIBentrySTDinterwordspacing

\bibitem{Moha1812:Multi}
\BIBentryALTinterwordspacing
U.~Y. Mohammad and S.~Sorour, ``{Multi-Objective Resource Optimization for
  Hierarchical Mobile Edge Computing},'' in \emph{2018 IEEE Global
  Communications Conference: Mobile and Wireless Networks (Globecom2018 MWN)},
  Abu Dhabi, United Arab Emirates, dec 2018, pp. 1--6. [Online]. Available:
  \url{https://ieeexplore.ieee.org/document/8648109}
\BIBentrySTDinterwordspacing

\end{thebibliography}
	
	%\balance
	%\appendix
	\section*{Appendix}
	\section*{A. Proof of Theorem \ref{theorem1}}
	\label{appendix}
	The objective function $O$ is strictly convex under certain conditions because $\frac{\partial^2 O}{\partial \tau^2}>0$. Because $\tau$ is a positive integer, the optimal value $\tau^*$ is the argument that minimizes $O(\tau)$. 	
	The reciprocal of $L(\tau)$ in (\ref{eq_O_tau}) can be separated into two terms $M(\tau)$ and $N(\tau)$ as follows:
	\begin{subequations}
		\begin{align}
		M(\tau) =  \dfrac{d}{KT} \dfrac{\prod_{k=1}^{K}(\tau + b_k)}{\sum_{k=1}^{K} \tau \prod_{\substack{l=1 \\ l\neq k}}^{K} \left(\tau^+b_l\right)} \\
		N(\tau) =  \dfrac{1}{KT} \dfrac{\sum_{k=1}^{K} a_k \prod_{\substack{l=1 \\ l\neq k}}^{K} \left(\tau^+b_l\right)}{\sum_{k=1}^{K} \tau \prod_{\substack{l=1 \\ l\neq k}}^{K} \left(\tau^+b_l\right)}
		\end{align}
	\end{subequations}
	Therefore, the objective function can be re-written as $O(\tau)=O_1(\tau)+O_2(\tau)$ where $O_1(\tau) = M(\tau)P(\tau)$ and $O_2(\tau) = N(\tau)P(\tau)$. Moreover, the term $P(\tau)$ can be written as the reciprocal of $\nu(\tau)$ where 
	\begin{equation}
	\nu(\tau) = A-B\frac{C^\tau-1-(C-1)\tau}{\tau}
	\end{equation} 
	The constants $A = \eta\left(1-\frac{\beta\eta}{2}\right)$, $B = \frac{\delta}{\beta}\frac{\rho}{\omega\epsilon^2}$ and $C = \eta\beta+1$. We can say that $B=\frac{\delta}{\beta} B_0$ where $B_0 = \frac{\rho}{\omega\epsilon^2} > 0$ is a control parameter that can be set empirically.
	
	For brevity, we will represent $f(\tau)$ as $f$ where $f$ may be $O$, $O_1$, $O_2$, $M$, $N$ or $P$. We will also represent $\frac{\partial f}{\partial \tau}$ and $\frac{\partial^2 f}{\partial \tau^2}$ as $f^\prime$ and $f^{\prime \prime}$, respectively. Using this new notation, $O^{\prime \prime}$ can be given as follows:
	\begin{equation}
	\label{eq_O_SecondDreivative}
	O^{\prime \prime} = P(M^{\prime \prime}+N^{\prime \prime}) + 2P^\prime(M^\prime+N^\prime) + (M+N)P^{\prime \prime}
	\end{equation}
	
	By definition $M > 0$ and $N > 0$ because they are related to the time consumed by user $k$, and $P > 0$ assuming that the constraint in (\ref{relaxed-minloss-const}) holds. Hence, if we can show that each of the three terms in (\ref{eq_O_SecondDreivative}) are strictly positive, then $O$ will be strictly convex. We need to show that  $P^{\prime \prime} > 0$ and $M^{\prime \prime}+N^{\prime \prime} > 0$. Furthermore, if we can show that $M^\prime+N^\prime$ and $P^\prime$ follow the same sign, $O^{\prime \prime} > 0$. 
	
	The first derivatives of $M$, $N$, and $P$ can be given by:
	\begin{subequations}
		\begin{align}
		M^{\prime} = -\dfrac{d}{KT}\dfrac{\sum_{k=1}^{K}\frac{b_k}{(\tau + b_k)^2}}{\left(\sum_{k=1}^{K}\frac{\tau}{(\tau + b_k)}\right)^2}
		\end{align}
		\begin{align}
		N^{\prime} = \dfrac{M^\prime}{d} -\dfrac{1}{KT} \dfrac{2\sum_{k=1}^{K}\frac{b_k}{(\tau + b_k)^3}}{\left(\sum_{k=1}^{K}\frac{\tau}{(\tau + b_k)}\right)^2}
		\end{align}
		\begin{align}
		P^{\prime} = -B\dfrac{C^\tau[1-(\ln C)\tau] - 1}{\left[A\tau - B(C^\tau-1-(C-1)\tau) \right]^2 } 
		\label{eq_P_prime}
		\end{align}
	\end{subequations}  
	The variables $a_k = \frac{C_k^1}{C_k^2}$ and $b_k = \frac{C_k^0}{C_k^2}$, respectively, and both are positive quantities. For $M^\prime$, the first term outside the square bracket is always negative whereas the term inside is a sum of positive quantities whereas $N^\prime$ is a sum of negative quantities. Therefore, $M^\prime < 0$ and $N^\prime < 0$. Hence, we need to show that $P^\prime < 0$ or find the domain on which $P^\prime < 0$.	
	
	The complete expressions of $M^{\prime \prime}$ and $N^{\prime \prime}$ can be given by:
	\begin{subequations}
		\begin{multline}
		M^{\prime \prime} = \dfrac{d}{KT} \dfrac{1}{\left(\sum_{k=1}^{K}\frac{\tau}{(\tau + b_k)}\right)^2} \\ \left[2\sum_{k=1}^{K}\frac{b_k}{(\tau + b_k)^3} + \sum_{k=1}^{K}\frac{b_k}{(\tau + b_k)^2}\right]
		\end{multline}
		\begin{multline}
		N^{\prime \prime} = \dfrac{1}{KT} \sum_{k=1}^{K} \dfrac{a_k}{\left(\tau+b_k\right)^3 \left(\sum_{l=1}^{K}\frac{\tau}{(\tau + b_l)}\right)^4} \\ \left[\sum_{l=1}^{K} \dfrac{\tau}{\tau+b_l} + 2(\tau + b_k)\sum_{l=1}^{K} \dfrac{1}{(\tau+b_l)^2} + \sum_{l=1}^{K} \dfrac{2}{(\tau+b_l)^3}\right] 
		\end{multline}
	\end{subequations} 
	It can be shown that $M^{\prime \prime}$ and $N^{\prime \prime}$ are strictly greater than zero because they are both a sums of positive terms. Hence, the left-most term in (\ref{eq_O_SecondDreivative}) is strictly positive. Thus we need to show that $P^{\prime\prime}>0$ or find the domain for which this is true. 
	
	Although the complete expression for $P^{\prime \prime}$ is omitted for brevity, it can be shown that the necessary and sufficient conditions to achieve $P^\prime < 0$ and $P^{\prime\prime}>0$ is to satisfy $C^\tau[1-(\ln C)\tau] > 1$. From the Bernoulli inequality, we know that $C^\tau \geq (C-1)\tau+1$. Assuming the worst case where the equality holds, the expression can be written as $[(C-1)\tau+1][1-(\ln C)\tau] > 1$. By expanding the expression, we can show that we need to check the following inequality:
	\begin{equation}
	\tau\left[(\ln C) \tau - C (\ln C) -1 + C\right] > 0
	\end{equation}
	We know that as long as a feasible $\tau^* > 0$ is found, we need to satisfy the second term enclosed by the square brackets. Writing $\tau$ as a function of $C$ and, we can see that the condition on $\tau$ is the following:  
	\begin{equation}
	\tau > \dfrac{C (\ln C) +1 - C}{\ln C} 
	\end{equation}
	Recall that $C= \eta\beta+1$ where $\eta$ is chosen such that $\eta\beta \leq 1$, and $\eta, \beta>0$. Hence, it follows that $1 \leq \eta\beta+1 \leq 2$. If we plot, $f(C) = \frac{C (\ln C) +1 - C}{\ln C} $ against the domain of $C$, we notice that for $\nu^{\prime\prime} < 0$ and hence, for $P^{\prime}$ to be strictly negative and $P^{\prime\prime}$ to be strictly positive, it is sufficient for $\tau > 0$. Hence, we have now proved that $O(\tau)$ is strictly convex because $\frac{\partial^2 O}{\partial \tau^2} > 0$ as long as $\tau$ is a positive integer and the ML model variables are selected as defined. 
	
\end{document}